\def\mP{P}
\def\bbR{\mathbb{R}}
\def\ds{\displaystyle}
\def\mct{\mathcal{R}}
\newcommand{\fns}[1]{{\footnotesize #1}}
\newcommand{\hf}[1]{\hfil #1 \hfil}
\newcommand{\ru}[1]{\rule{0pt}{#1pt}}
\title[Аналитическая классификация равномерных вращений гиростата Ковалевской\,--\,Яхья]{АНАЛИТИЧЕСКАЯ КЛАССИФИКАЦИЯ РАВНОМЕРНЫХ\\ВРАЩЕНИЙ ГИРОСТАТА КОВАЛЕВСКОЙ\,--\,ЯХЬЯ
}
\author{М.П.~Харламов}
\thanks{\hspace{-5mm}{Работа выполнена при финансовой поддержке РФФИ (грант № 10-01-00043).}}
\address{Волгоградский филиал РАНХиГС, Россия}
\email{mharlamov@vags.ru}
\begin{document}

\maketitle

\begin{abstract}[ru]
Представлено полное исследование множества равномерных вращений гиростата в случае интегрируемости Ковалевской\,--\,Яхья.
Введено понятие классов эквивалентности относительно определяющих параметров, построено разделяющее множество. Для каждого класса вычислен тип особенности как тип неподвижной точки в приведенной системе, получен характер устойчивости, указана структура локального слоения Лиувилля.
\keywords{гиростат Ковалевской\,--\,Яхья, равномерные вращения, устойчивость, круговая молекула.}
\end{abstract}

\section*{Введение}\label{sec1}
Движение гиростата Ковалевской--Яхья описывается уравнениями
\begin{equation}\label{eq:1}
\begin{array}{lll}
2\dot\omega _1   = \omega _2 (\omega _3- \lambda)  , &
2\dot\omega _2 =  - \omega _1 (\omega _3-\lambda)  - \alpha _3 , &
\dot\omega _3   = \alpha _2, \\
\dot\alpha _1   = \alpha _2 \omega _3  - \alpha _3 \omega _2 , &
\dot\alpha _2   = \alpha _3 \omega _1  - \alpha _1 \omega _3 , &
\dot\alpha_3   = \alpha_1 \omega_2  - \alpha_2 \omega_1.
\end{array}
\end{equation}
Здесь $\bs \omega$ -- угловая скорость, $\bs \alpha$ -- вектор силового поля, $\lambda > 0$ -- гиростатический момент. Фазовое пространство $\mP^5=\bbR^3_{\omega}{\times}S^2_{\alpha}$ определено в $\bbR^6$ геометрическим интегралом $\Gamma={\bs \alpha}^2$, постоянная которого принимается равной единице. Первые интегралы системы таковы
\begin{equation}\notag
\begin{array}{l}
H = \omega _1^2  + \omega _2^2 + \ds{\frac{1}{2}}\omega _3^2 -
\alpha _1, \qquad
L = \omega _1 \alpha _1  + \omega _2 \alpha _2  +\frac{1}{2} (\omega _3+\lambda) \alpha _3, \\
K=(\omega_1^2-\omega^2_2+\alpha_1)^2+(2\omega_1\omega_2+\alpha_2)^2 + 2\lambda[(\omega_3-\lambda)(\omega_1^2+\omega^2_2)+2\omega_1 \alpha_3].
\end{array}
\end{equation}
Интеграл $K$ указан в работе \cite{Yeh1}. Система \eqref{eq:1} имеет гамильтонову форму $\dot x =\{H,x\}$ относительно скобки Ли\,--\,Пуассона на пространстве $\bbR^6({\bs \omega},{\bs \alpha})$. Функции $\Gamma$ и $L$ являются аннуляторами скобки, поэтому на любом их совместном уровне $\mP^4_\ell=\{\Gamma=1, L=\ell\}$ возникает гамильтонова система с двумя степенями свободы (приведенная система).

Аналитическому и топологическому исследованию системы \eqref{eq:1} посвящено много работ. Основные из них указаны, например, в библиографии статьи \cite{mtt40}. Однако до сих пор нет строгой и полной классификации такого важного типа движений гиростата Ковалевской\,--\,Яхья, как равномерные вращения.

Известно, что в поле силы тяжести равномерные вращения (движения с постоянной угловой скоростью) возможны только вокруг вертикали, поэтому соответствующие значения пары $({\bs \omega},{\bs \alpha})$ являются неподвижными точками системы \eqref{eq:1} и ``положениями равновесия'' в приведенных системах. В силу этого равномерные вращения называют также относительными равновесиями. Ниже относительным равновесием называется именно неподвижная точка системы \eqref{eq:1}, а под характером устойчивости равномерного вращения понимается характер устойчивости соответствующей неподвижной точки в содержащей ее приведенной системе на $\mP^4_\ell$.

Равномерными вращениями гиростатов различных конфигураций занимались многие авторы. Вероятно, наиболее полное изложение истории вопроса можно найти в \cite{Gorr}. Некоторая информация об устойчивости относительных равновесий в случае Ковалевской\,--\,Яхья может быть получена из работы \cite{Gash1}, где, в частности, приведены условия существования движений, асимптотических к семейству периодических траекторий, включающих и все относительные равновесия как точки бифуркации. Вопрос существования асимптотических движений для этого же семейства изучался в работе \cite{Malah}, но полученные в \cite{Malah} условия сразу же следуют из результатов \cite{Gash1}.
Однако в задаче классификации неподвижных точек интегрируемой системы информация о наличии или отсутствии асимптотических движений является недостаточной для строгого описания характера устойчивости. Гораздо более точными характеристиками являются тип неподвижной точки и ее круговая молекула. Тип точки полностью определяет детальное свойство устойчивости (количество переменных, по которым неподвижная точка устойчива или неустойчива, и соответствующие направления в фазовом пространстве), а круговая молекула описывает топологию системы в окрестности неподвижной точки с точностью до изоморфизма слоения на интегральные многообразия.

Цель настоящей работы -- в терминах подходящих параметров определить понятие класса эквивалентности относительных равновесий гиростата Ковалевской\,--\,Яхья, указать все разделяющие значения параметров, для каждого класса указать тип относительного равновесия как особой точки интегрируемой гамильтоновой системы, характер устойчивости, топологию совместного уровня первых интегралов, содержащего относительное равновесие, и возникающее в его окрестности слоение интегральных многообразий.

\section{Множество относительных равновесий}

Множество точек $({\bs \omega},{\bs \alpha})$ в $\mP^5$, отвечающих равномерным вращениям, т.е. множество относительных равновесий всех приведенных систем, обозначим через $\mct$.  Локально множество $\mct$ параметризовано значением $\ell$ интеграла площадей (для приведенных систем относительные равновесия -- это критические точки ранга $0$). В то же время, на каждом $\mP^4_\ell$ имеется конечное число точек $\mct$, существенно меняющееся с изменением $\ell$, поэтому аналитические выражения в зависимости от $\ell$ плохо подходят для исследований.

Известно, хотя специально это нигде не отмечено, что все равномерные вращения в случае Ковалевской\,--\,Яхья содержатся в семействе траекторий, отвечающем решению, найденному П.В.\,Харламовым в работе \cite{PVLect}. Это решение обобщает на гиростат с осевой динамической симметрией и произвольным отношением осевого и экваториального моментов инерции случай интегрируемости Бобылева--Стеклова. Если первая ось инерции, как и в уравнениях~\eqref{eq:1}, выбрана в направлении центра масс, а последний лежит в экваториальной плоскости, то соответствующее фазовое пространство задано системой инвариантных соотношений $\{\omega_2 =0, \, \dot \omega_2 =0\}$. В принятых здесь обозначениях явные формулы решения \cite[\S\,5.4]{PVLect} имеют вид
\begin{equation}\label{eq:2}
\begin{array}{lll}
  \ds{\omega_1=p_0,} &  \ds{\omega_2=0,} & \ds{\omega_3 = r,}\\
  \ds{\alpha_1=p_0^2+\frac{1}{2} r^2-h,} &
  \ds{\alpha_2= f(r), }&
  \ds{\alpha_3 = -p_0 (r-\lambda),}\\
\end{array}
\end{equation}
где $h$ -- произвольная постоянная интеграла энергии, $p_0$ -- свободный параметр, $r$ -- независимая переменная, подчиненная уравнению ${\dot r} = f(r)$, и
$$
\ds f^2(r) = -\frac{1}{4}r^4-(2p_0^2-h)r^2+2 \lambda p_0^2 r+1-(p_0^2-h)^2-p_0^2\lambda^2.
$$
Относительное равновесие получим, если величина $r$ постоянна и равна кратному корню многочлена $f^2(r)$. Тогда
\begin{equation}\label{eq:3}
    \begin{array}{c}
      \omega_1 = \pm V, \quad \omega_2 = 0, \quad \omega_3=r ,\\
      \alpha_1 = \displaystyle{\frac{1}{2}\left[r(r-\lambda)- d\right]}, \quad \alpha_2 = 0, \quad \alpha_3 = \mp (r-\lambda) V,
    \end{array}
\end{equation}
где
\begin{equation}\notag
    d^2 = r^2(r-\lambda)^2+4, \qquad V=\sqrt{\displaystyle{\frac{r}{2}\left[-r+\frac{1}{r-\lambda}d\right]}} \geqslant 0.
\end{equation}
Знаки $\omega_1, \alpha_3$ согласованы $($оба верхние или оба нижние$)$. Изучая кривые пересечения бифуркационных поверхностей, И.Н.\,Гашененко \cite{Gash4} указал область изменения константы $r$, рассматриваемой как единственный свободный параметр, правило выбора знака $d$, а также явные выражения первых интегралов. Эти результаты непосредственно применимы к относительным равновесиям.

\begin{proposition}[\cite{Gash4}]\label{propos2}
{\it Для относительных равновесий параметр $r$ пробегает множество}
\begin{equation}\label{eq:4}
    r \in (-\infty,0] \cup [0,\lambda) \cup (\lambda,+\infty),
\end{equation}
{\it знак $d$ совпадает со знаком $r (r-\lambda)$ при $r \ne 0$ и произволен при $r=0$, а значения первых интегралов таковы}
\begin{equation}\notag
\begin{array}{c}
    \ell = \mp \ds{\frac{1}{2}[\lambda(r-\lambda)+d]} V,\quad
    h= -\ds{\frac{1}{2}r(r-\lambda)+\frac{2r-\lambda}{2(r-\lambda)}d}, \\
    k  = \ds{\frac{\lambda}{4(r-\lambda)^2} [r(r-\lambda)-d][r(r-\lambda)(4 r-3\lambda)-\lambda d]}.
\end{array}
\end{equation}
\end{proposition}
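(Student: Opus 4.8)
The statement splits into two essentially independent assertions: a reality (existence) condition that fixes the admissible values of $r$ together with the sign of $d$, and explicit expressions for the three integral constants on the motion \eqref{eq:3}. The plan is to treat \eqref{eq:3} as the definition of the critical motion and to prove both parts by direct substitution; the motion \eqref{eq:3} itself is just the degenerate (double-root) limit of the oscillatory family \eqref{eq:2}, which is what makes it a fixed point. As a preliminary I would verify that \eqref{eq:3} really is a fixed point of \eqref{eq:1} lying on $\{\Gamma=1\}$: substituting \eqref{eq:3} into the right-hand sides of \eqref{eq:1} and using $V^2=\frac{r}{2(r-\lambda)}[d-r(r-\lambda)]$ together with $d^2=r^2(r-\lambda)^2+4$ makes every derivative vanish, while the same relation gives $\alpha_1^2+\alpha_2^2+\alpha_3^2=1$ identically. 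This legitimizes \eqref{eq:3} and shows that the only restriction on $r$ comes from reality of $V$.

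For the domain \eqref{eq:4} and the sign of $d$ the decisive point is the inequality $V^2\geqslant0$. Writing $V^2=\ds\frac{r\,[\,d-r(r-\lambda)\,]}{2(r-\lambda)}$ I would first exclude $r=\lambda$, where the expression is undefined. Since $d^2=r^2(r-\lambda)^2+4>[r(r-\lambda)]^2$, one has $|d|>|r(r-\lambda)|$, so the bracket $d-r(r-\lambda)$ carries the sign of $d$. Hence for $r\ne0,\lambda$ the sign of $V^2$ equals the sign of $\ds\frac{r}{r-\lambda}\,\mathrm{sign}(d)$, and because $r(r-\lambda)$ and $r/(r-\lambda)$ share the same sign, $V^2\geqslant0$ is equivalent to $d$ having the sign of $r(r-\lambda)$. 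This forces $d>0$ on $r<0$ and on $r>\lambda$, forces $d<0$ on $0<r<\lambda$, and leaves $d=2>0$ at $r=0$ (where $V=0$); this is exactly \eqref{eq:4} with its stated sign convention, the split at $0$ recording the change of sign of $r(r-\lambda)$ there.

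Finally I would substitute \eqref{eq:3} into $L$, $H$, $K$. For $\ell$, collecting the factor $V$ and using $(r+\lambda)(r-\lambda)=r^2-\lambda^2$ gives $\ell=-\frac12[\lambda(r-\lambda)+d]V$ at once, the alternating sign reflecting the two choices $\omega_1=\pm V$. For $h$, rewriting $V^2=\ds\frac{rd}{2(r-\lambda)}-\frac{r^2}{2}$ and combining the $d$-terms over the common denominator $r-\lambda$ yields $h=-\frac12r(r-\lambda)+\ds\frac{2r-\lambda}{2(r-\lambda)}d$. The value of $k$ is the main computational obstacle. Since $\omega_2=\alpha_2=0$ one has $K=(V^2+\alpha_1)^2-2\lambda(r-\lambda)V^2$; setting $A=d-r(r-\lambda)$, so that $\alpha_1=-A/2$ and $V^2=rA/[2(r-\lambda)]$, gives $V^2+\alpha_1=\lambda A/[2(r-\lambda)]$ and hence $K=\ds\frac{\lambda A}{4(r-\lambda)^2}\,[\lambda A-4r(r-\lambda)^2]$. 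The claimed formula then follows from the two factorizations $\lambda A=-\lambda[r(r-\lambda)-d]$ and $\lambda A-4r(r-\lambda)^2=-[r(r-\lambda)(4r-3\lambda)-\lambda d]$, the second being the delicate grouping that must be carried out carefully. I expect the $K$-computation and the sign bookkeeping in the reality analysis to be the only steps requiring attention; the rest is routine algebra.
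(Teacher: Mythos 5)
Your verification is correct in every detail, but note first that the paper itself offers no proof of this proposition: it is imported from Gashenenko's paper (the bracketed citation in the proposition header), so there is no in-paper argument to compare yours against, and a self-contained check by direct substitution, as you propose, is the natural route. All your computations check out. The equilibrium verification of \eqref{eq:3} in \eqref{eq:1} works exactly as you say, using $(r-\lambda)V^2=-\frac{r}{2}[r(r-\lambda)-d]$; with $u=r(r-\lambda)$ one also gets $\alpha_1^2+\alpha_3^2=\frac{1}{4}(d^2-u^2)=1$ from $d^2=u^2+4$, confirming $\Gamma=1$. The sign analysis is sound: $|d|>|u|$ forces $d-u$ to carry the sign of $d$, and since $r/(r-\lambda)$ and $r(r-\lambda)$ have the same sign, $V^2\geqslant 0$ is equivalent to $\mathop{\rm sgn}\nolimits d=\mathop{\rm sgn}\nolimits\,[r(r-\lambda)]$ for $r\ne 0,\lambda$, which is \eqref{eq:4} with the stated convention. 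The substitutions into $L$, $H$, $K$ are all correct, including the two delicate steps for $k$: with $A=d-u$ one indeed has $V^2+\alpha_1=\lambda A/[2(r-\lambda)]$, hence $K=\frac{\lambda A}{4(r-\lambda)^2}\,[\lambda A-4r(r-\lambda)^2]$, and the grouping $\lambda A-4r(r-\lambda)^2=\lambda d-r(r-\lambda)[\lambda+4(r-\lambda)]=-[r(r-\lambda)(4r-3\lambda)-\lambda d]$ yields the stated formula; the $\mp$ in $\ell$ correctly tracks $\omega_1=\pm V$.

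One small inaccuracy of presentation: you write that the reality analysis "leaves $d=2>0$ at $r=0$", but at $r=0$ one has $V=0$ for either root $d=\pm 2$, so $V^2\geqslant 0$ imposes nothing there; both signs occur and give the two equilibria $c_\pm$ (indeed the paper's list \eqref{eq:5} assigns $d\to -2$ on $\mct_1$ as $r\to +0$). The clause "$d$ positive for $r=0$" in the proposition is a branch-labeling convention, reflected in the double counting of $r=0$ in \eqref{eq:4}, not a consequence of reality, and you should state it as such. This does not affect the substance: your proof is complete and correct.
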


Из \eqref{eq:3}, \eqref{eq:4} сразу же следует, что множество $\mct$ имеет ровно четыре связных компоненты, гомеоморфных вещественной прямой при фиксированном $\lambda>0$. В соответствии с промежутками изменения $r$ введем обозначения для подмножеств в $\mct$, определяемых формулами \eqref{eq:3}:
\begin{equation}\label{eq:5}
    \begin{array}{llll}
      \mct_1: & r \in [0,\lambda), & d<0, & \ds{\lim_{r\to +0}} d = -2, \\
      \mct_2: & r \in (-\infty,0] , & d > 0, & \ds{\lim_{r\to -0}} d = 2, \\
      \mct_3: & r \in (\lambda, +\infty), & d > 0.&
    \end{array}
\end{equation}
Первые два множества связны, последнее состоит из двух компонент, отличающихся знаком $\omega_1$. В $\mct_1$ и $\mct_2$ каждому значению $r\ne 0$ отвечают ровно две точки. В $\mct_1$ значению $r=0$ отвечает точка ${\bs \omega}=0$, ${\bs \alpha}=\{1,0,0\}$ нижнего (абсолютного) положения равновесия тела. Для дальнейшего обозначим ее в соответствии со стремлением $r$ к нулю справа через $c_+$. В $\mct_2$ нулевое значение $r$ приводит к точке ${\bs \omega}=0$, ${\bs \alpha}=\{-1,0,0\}$ верхнего (абсолютного) положения равновесия. Обозначим ее через $c_-$. На множествах \eqref{eq:5} определена очевидная симметрия
${\bs \tau}: (\omega_1,\alpha_3) \mapsto (-\omega_1,-\alpha_3)$,
которая меняет знак постоянной площадей $\ell$, связные множества $\mct_1$, $\mct_2$ переводит в себя, а в множестве $\mct_3$ меняет местами связные компоненты. Устройство семейства множеств $\mct(\lambda)$ проиллюстрировано на рис.~\ref{fig1}.

\begin{figure}[h!]
\centering
\includegraphics[width=80mm,keepaspectratio]{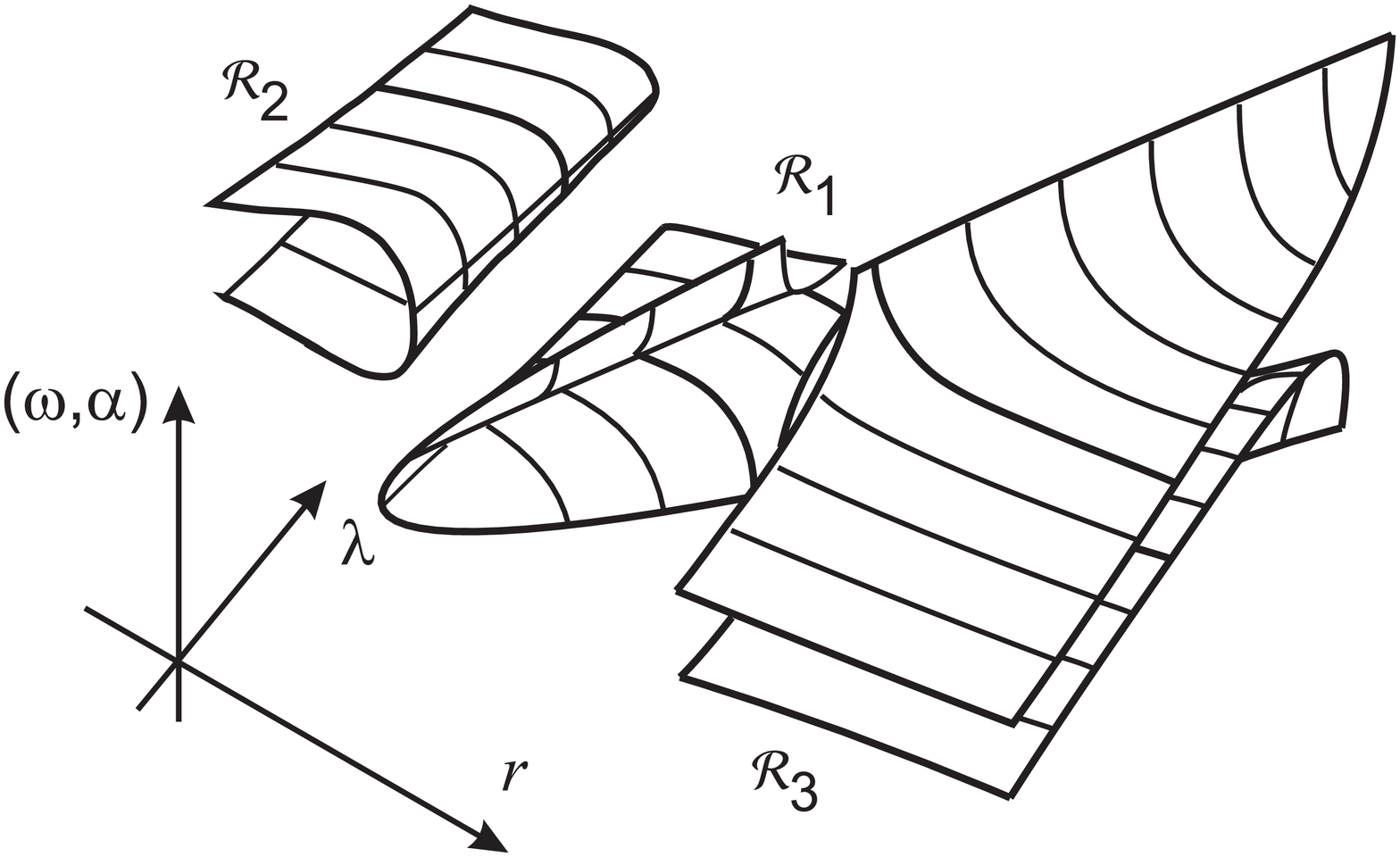}\\
\caption{Поверхности относительных равновесий.}\label{fig1}
\end{figure}

Параметр $\lambda$ обычно трактуется как физический. Однако гиростат -- это система с четырьмя степенями свободы (тело плюс ротор), в которой $\lambda$ есть константа циклического интеграла, а уравнения Эйлера\,--\,Пуассона описывают систему, полученную понижением порядка. Поэтому все исследования каких-либо свойств системы относительно интегральных постоянных естественно проводить в {\it расширенном} пространстве параметров, включающем ось значений $\lambda$. Введем следующее понятие. Если $\{S=S(\lambda)\}$ -- семейство зависящих от параметра $\lambda$ множеств в $\mP^5$, определим {\it расширенное множество}
\begin{equation}\notag
    \Lambda({S}) = \bigcup_{\lambda} S(\lambda)\times \{\lambda\} \subset \mP^5 \times \bbR=\{({\bs \omega},{\bs \alpha},\lambda): {\bs \omega}\in \bbR^3, {\bs \alpha} \in S^2, \lambda \in \bbR\}.
\end{equation}
Считая $\lambda>0$, случай $\lambda=0$ рассматриваем лишь как предельную возможность там, где это явно оговорено. Как следует из предложения~\ref{propos2}, множество $\Lambda({\mct})$ в части $\lambda>0$ гомеоморфно четырем плоскостям и дважды накрывает область $\mathcal{D}=\{(r,\lambda)\in \bbR^2: \lambda >0, r \ne \lambda \}.$ Фактически это накрытие и показано на рис.~\ref{fig1}.

Условимся образы множеств $\Lambda({\mct_i})$ при отображении на плоскость параметров $(r,\lambda)$ обозначать через $\delta_i$ $(i=1,2,3)$:
\begin{equation}\notag
    \begin{array}{ll}
      \delta_1: \{(r,\lambda): 0 \leqslant r < \lambda, \; \lambda >0 \}, \\
      \delta_2: \{(r,\lambda): r \leqslant 0, \; \lambda >0 \}, \\
      \delta_3: \{(r,\lambda): r > \lambda, \; \lambda >0 \}.
    \end{array}
\end{equation}
При дополнительной детализации будем снабжать подмножества этих множеств двойными индексами.

\section{Классы эквивалентности и типы относительных равновесий}

Напомним некоторые понятия, связанные с классификацией неподвижных точек в интегрируемой системе с двумя степенями свободы \cite{BolFom}.
Все объекты предполагаются аналитическими. Система, заданная гамильтонианом $F$, обозначается $\mathop{\rm sgrad}\nolimits F$ и в локальных координатах имеет вид
\begin{equation}\label{eq:6}
    \dot x = \Omega(x) \nabla F(x), \qquad x \in \bbR^4.
\end{equation}
Здесь $\Omega$ -- невырожденная кососимметрическая $4{\times}4$-матрица, которая определяется симплектической структурой или, как в случае уравнений Эйлера\,--\,Пуассона, вырожденной скобкой Пуассона на некотором объемлющем пространстве.

Пусть $x_0$ -- неподвижная точка системы, т.е. $\nabla F(x_0)=0$. Тогда линеаризованная система в точке $x_0$ задана матрицей
\begin{equation}\notag
    A_F = \Omega (x_0) D_F (x_0) \qquad \left(D_F = \| \frac{\partial^2 F}{\partial x_i \partial x_j}\| \right).
\end{equation}
Хорошо известно, что характеристическое уравнение матрицы $A_F$ биквадратное, поэтому его корни либо разбиты на две пары, каждая из которых имеет вид $\pm a$ или $\pm b \, \mathrm{i}\,$, либо составляют четверку $\pm a \pm b\, \mathrm{i}\,$. Пусть $\det A_F \ne 0$. Тогда в первом случае $\bbR^4$ есть прямая сумма двух инвариантных плоскостей, ограничения на которые линеаризованной системы имеют начало координат седлом или центром. Во втором случае начало координат в $\bbR^4$ является фокусной особенностью. В соответствии с этим говорят, что точка $x_0$ имеет тип ``центр-центр'' в случае двух пар мнимых корней, ``центр-седло'' (или ``седло-центр'', если важно подчеркнуть порядок следования корней) в случае пары мнимых и пары вещественных корней, ``седло-седло'' в случае двух пар вещественных корней, ``фокус-фокус'' в случае четверки комплексных корней.
В интегрируемой системе тип точки полностью определяет характер устойчивости: точка типа ``центр-центр'' устойчива по всем переменным, точка типа ``центр-седло'' по двум переменным устойчива и по двум -- неустойчива, остальные -- неустойчивы по всем переменным.

\begin{remark}\label{remm1}
{\it Пусть система \eqref{eq:6} изначально задана на некотором объемлющем пространстве {\rm (}как, например, $\bbR^6({\bs \omega},{\bs \alpha})$ в случае системы Эйлера\,--\,Пуассона{\rm )}, а фактическое фазовое пространство {\rm (}в нашей задаче -- $\mP^4_\ell${\rm )} является совместным уровнем первых интегралов -- аннуляторов скобки Пуассона {\rm (}в нашем случае это интегралы $\Gamma$ и $L${\rm )}. Пусть $\Phi$ -- один из таких интегралов. Тогда $A_F \nabla \Phi=0$, и этот интеграл порождает нулевое собственное значение матрицы $A_F$. Поэтому при вычислении собственных чисел оператора $A_F$ {\rm (}в отличие от вычисления собственных чисел ограничения оператора $D_F$ на подмногообразия совместного уровня{\rm )} интегралы такого рода учитывать не нужно, а следует просто отбросить соответствующее количество нулевых корней характеристического многочлена, которые заведомо существуют. Таким образом, нет необходимости переходить к каким-либо специальным координатам, отвечающим приведенным системам, а все вычисления легко проделать в исходных переменных {\rm (}в нашем случае -- в переменных Эйлера\,--\,Пуассона{\rm )}. Говоря далее о собственных значениях операторов вида $A_F$, вычисленных в переменных Эйлера\,--\,Пуассона, всегда имеем в виду, что два нулевых значения уже отброшены.
}
\end{remark}

Отметим еще одно преимущество перехода от $D_F$ к $A_F$ -- характеристическое уравнение $A_F$ зависит не от самой точки $x_0$, а от значений первых интегралов в этой точке.

Пусть $G$ -- первый интеграл, независимый с $F$ почти всюду в окрестности точки $x_0$ и такой, что $x_0$~-- неподвижная точка для $\mathop{\rm sgrad}\nolimits G$ (последнее заведомо выполнено, если $\det D_F(x_0) \ne 0$).

\begin{definition}[~\cite{BolFom}]\label{def3}
Неподвижная точка $x_0$ называется невырожденной, если:

$1^\circ)$ матрицы $A_F$ и $A_G$ линейно независимы;

$2^\circ)$ существует линейная комбинация матриц $A_F$ и $A_G$, у которой все четыре собственных значения различны.
\end{definition}

Невырожденность позволяет описать топологию интегральных многообразий в окрестности точки $x_0$ в виде почти прямого произведения атомов одномерных систем. При этом не запрещается ``странная'' динамика, в которой все точки одного из сомножителей могут оказаться неподвижными. Дополнительное условие  $\det A_F\ne 0$ гарантировало бы, что неподвижная точка изолирована в четырехмерном фазовом пространстве. Ниже показано, что в рассматриваемой задаче в невырожденных точках это условие выполняется. Для краткости вырожденную точку $x_0$ назовем \emph{сильно} вырожденной, если для нее нарушается условие $1^\circ$ определения~\ref{def3}, и \emph{слабо} вырожденной в противном случае. Эти термины не являются стандартными, а из сильной вырожденности не следует слабая.

Для уравнений Эйлера\,--\,Пуассона 2-форма, индуцированная на $\mP^5$ симплектической структурой многообразия $TSO(3)$, заведомо вырождена. Введенные понятия необходимо рассматривать с точки зрения систем на $\mP^4_\ell$. Но, как отмечалось, в явном переходе к этим системам, которые сделали бы вычисления необозримыми, нет необходимости. Корректно определены скобки Ли\,--\,Пуассона на $\bbR^6({\bs \omega},{\bs \alpha})$, поэтому поле $\mathop{\rm sgrad}\nolimits F$, сопоставленное функции $F$, определено уравнениями
\begin{equation}\label{eq:7}
\dot {\mathbf{M}}=\mathbf{M} \times \ds{\frac{\partial F}{\partial \mathbf{M}}} + {\bs \alpha} \times \ds{\frac{\partial F}{\partial {\bs \alpha}}},\quad
\dot {{\bs \alpha}}= {\bs \alpha} \times \ds{\frac{\partial F}{\partial \mathbf{M}}}.
\end{equation}
Здесь $\mathbf{M} = \mathbf{I} {\bs \omega} +{\bs \lambda}$, $\mathbf{I}=\mathop{\rm diag}\nolimits\{2,2,1\}$, ${\bs \lambda} =(0,0,\lambda)$.

Вернемся к рассматриваемой системе.

\begin{definition}\label{def7}
Будем говорить, что точки $\xi_1,\xi_2 \in \Lambda({\mct})$ принадлежат к одному классу, если существует непрерывный путь~в\, $\Lambda({\mct})$, соединяющий $\xi_1$ с $\xi_2$ или $\xi_1$ с ${\bs \tau}(\xi_2)$, вдоль которого тип точки не меняется.
\end{definition}

Пусть $(r,\lambda) \in \mathcal{D}$. При $r\ne 0$ и $\mathop{\rm sgn}\nolimits d=\mathop{\rm sgn}\nolimits r(r-\lambda)$ обозначим через $\xi_\pm(r,\lambda)$ две точки \eqref{eq:3} в $\mP^5$. В соответствии с принятыми ранее обозначениями имеем
\begin{equation}\notag
\lim _{r \to +0} \xi_\pm(r,\lambda) = c_+ \in \mct_1, \qquad \lim _{r \to -0} \xi_\pm(r,\lambda) = c_- \in \mct_2.
\end{equation}

\begin{definition}\label{def8}
Точку $(r,\lambda)$ назовем разделяющей, если в любой ее окрестности найдутся образы точек из $\Lambda({\mct})$ разных классов.
\end{definition}

Обозначим через $\overline{\pi}$ луч запрещенных точек $\{{r=\lambda,\, \lambda > 0}\}$, не включенный в $\mathcal{D}$. Он является разделяющим, если в качестве области изменения параметров рассматривать всю полуплоскость $\{\lambda \geqslant 0\}$, т.е. замыкание множества $\mathcal{D}$.
Поскольку $\xi_1\in \mct_1$ и $\xi_2\in \mct_2$ не могут принадлежать одному классу, точки вида $(0,\lambda)$ всегда являются разделяющими. Обозначим полуось $\{{r=0,\,\lambda \geqslant 0}\}$ через $\pi_0$. Подчеркнем, что точки луча $\pi_0$ разделяют образы множеств $\mct_1$ и $\mct_2$ при любом $\lambda$, но внутри каждого из этих множеств значения $r=0$, как будет ясно ниже, не являются разделяющими за исключением конечного числа точек вида $(0,\lambda)$, в которых меняется тип абсолютного равновесия. При $r\ne 0$ типы точек $\xi_\pm(r,\lambda)$ всегда одинаковы, поэтому точка $(r,\lambda)$ является разделяющей тогда и только тогда, когда точки $\xi_\pm(r,\lambda)$ вырождены. Для исследования свойства вырожденности будем использовать пару функционально независимых (т.е. независимых почти всюду) первых интегралов $H$ (гамильтониан) и $K$ (интеграл Ковалевской\,--\,Яхья).

\begin{figure}[ht]
\centering
\includegraphics[width=8cm,keepaspectratio]{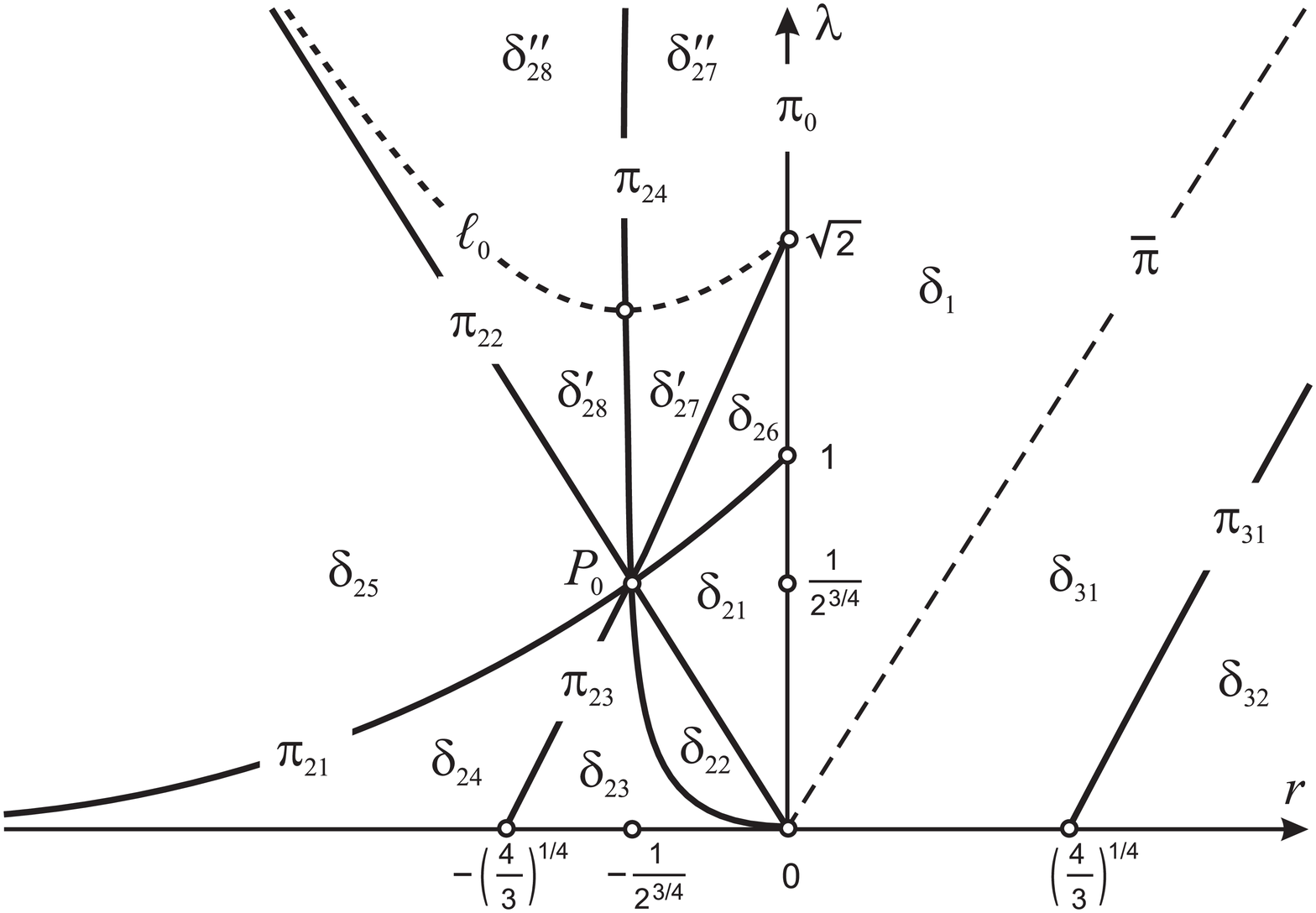}
\caption{Разделяющее множество и классы относительных равновесий.}\label{fig2}
\end{figure}

Вычислим характеристический многочлен оператора $A_H$. Для этого линеаризуем уравнения \eqref{eq:1} или, что то же самое, уравнения \eqref{eq:7} с $F=H$, после чего характеристический многочлен $6{\times}6$-матрицы линеаризованной системы в подстановке \eqref{eq:3} сократим на $\mu^2$ в соответствии с замечанием \ref{remm1}. Получим
$
\chi_H(\mu)=(\mu^2-\mu_1^2)(\mu^2-\mu_2^2),
$
где
\begin{equation}\label{eq:8}
\begin{array}{l}
  \mu_1^2  = -\ds{\frac{1}{4}}[(2r-\lambda)(r-\lambda)-d], \\
  \mu_2^2 = -\ds{\frac{1}{2(r-\lambda)}}[(2r-\lambda)(r-\lambda)r+\lambda d].
\end{array}
\end{equation}
Эти величины впервые вычислены в работе \cite{RyabUdgu}.

\begin{theorem}\label{theo1}
{\it Точки $\xi_\pm(r,\lambda)$ вырождены тогда и только тогда, когда точка $(r,\lambda) \in \mathcal{D}$ лежит на одной из кривых}
\begin{equation}\label{eq:9}
\begin{array}{l}
  \pi_{21}: \quad \ds r=\lambda-\frac{1}{\lambda^{1/3}}, \quad 0< \lambda \leqslant 1,  \\[2mm]
  \pi_{22}: \quad r=-\lambda, \quad \lambda >0,  \\
  \pi_{23}: \quad r=\ds{\frac{\sigma^4-4}{2\sigma^3}}, \qquad \lambda=\ds{\frac{3\sigma^4-4}{2\sigma^3}}, \qquad \sigma \in (\sqrt[4]{4/3},\sqrt{2}],\\
  \pi_{24}: \quad r=\ds{\frac{1}{2}}\left(\lambda-\sqrt{\lambda^2+4\lambda^{2/3}}\right), \qquad \lambda> 0,\\
  \pi_{31}: \quad r=\ds{\frac{\sigma^4-4}{2\sigma^3}}, \qquad \lambda=\ds{\frac{3\sigma^4-4}{2\sigma^3}}, \qquad \sigma \in (-\sqrt[4]{4/3},0).
\end{array}
\end{equation}
{\it Первый индекс в обозначении кривой соответствует номеру подмножества, т.е. кривая $\pi_{ij}$ содержится в соответствующем множестве $\delta_i \subset \mathcal{D}$. Только кривая $\pi_{21}$ отвечает точкам сильного вырождения.}

{\it Кривые \eqref{eq:9} порождают разбиение множества невырожденных относительных равновесий в $\Lambda({\mct})$ на $11$~классов в смысле определения~{\rm \ref{def7}} -- прообразов областей $\delta_1$, $\delta_{21}$--~$\delta_{28}$, $\delta_{31}$, $\delta_{32}$ {\rm (}рис.~{\rm \ref{fig2}}{\rm )}. Классы в прообразах подобластей $\delta_1, \delta_{21}, \delta_{26}, \delta_{27}$ связны, остальные имеют по две связных компоненты. В соответствии с обозначениями подобластей относительные равновесия имеют следующие характеристики{\rm :}}
{\it \begin{itemize}

\item $\delta_{1}, \delta_{24}, \delta_{25}, \delta_{28}, \delta_{32}$ -- $\mu_1^2<0,$ ${\mu_2^2<0}$, тип ``центр-центр'', устойчивы по всем переменным$;$

\item $\delta_{27}$ -- $\mu_1^2<0,$ ${\mu_2^2>0}$, тип ``центр-седло'', по двум переменным устойчивы, по двум -- не\-ус\-той\-чивы$;$

\item $\delta_{23}, \delta_{31}$ -- $\mu_1^2>0,$ ${\mu_2^2<0}$, тип ``седло-центр'', по двум переменным устойчивы, по двум -- не\-ус\-той\-чивы$;$

\item  $\delta_{21}, \delta_{22}, \delta_{26}$ -- $\mu_1^2>0,$ ${\mu_2^2>0}$, тип ``седло-седло'', неустойчивы по всем переменным$.$
\end{itemize}
}
\end{theorem}

Нетрудно уточнить утверждения об устойчивости, указав соответствующие направления в фазовом пространстве. Ясно, что таких уточнений требует лишь смешанный тип относительного равновесия, в котором присутствуют как ``седло'', так и ``центр''. Обозначим трехмерное многообразие, заданное уравнениями \eqref{eq:2}, сотканное из одномерных траекторий (периодических и асимптотических к относительному равновесию), через $\mathcal{M}_1$. В точке невырожденного относительного равновесия касательное пространство к $\mathcal{M}_1$ есть сумма прямой, касательной к одномерному семейству относительных равновесий, и плоскости, касательной к уровню интеграла площадей в самом $\mathcal{M}_1$, т.е. к пересечению $\mathcal{M}_1 \cap \mP^4_\ell$. Дополнение к трехмерному касательному пространству многообразия $\mathcal{M}_1$ в пятимерном фазовом пространстве $\mP^5$ есть плоскость, и эту плоскость можно выбрать так, что она окажется лежащей в трехмерном касательном пространстве к другому многообразию $\mathcal{M}_2$ частного решения, найденного в работах \cite{EIPVHDan, PVMtt71}, а именно, будет касательной плоскостью к уровню интеграла площадей в $\mathcal{M}_2$. За устойчивость относительно этой последней плоскости отвечает показатель $\mu_1^2$ (т.е. он является ``внешним'' типом относительного равновесия для инвариантного многообразия $\mathcal{M}_1$).
Направление прямой (общее для касательных пространств к $\mathcal{M}_1$ и $\mathcal{M}_2$) является собственным вектором линеаризованной системы с нулевым собственным значением и не учитывается при описании устойчивости (оно трансверсально уровню интеграла площадей). За устойчивость по отношению к направлениям в плоскости, касательной к $\mathcal{M}_1 \cap \mP^4_\ell$, отвечает показатель $\mu_2^2$.
Таким образом, при $\mu_1^2>0$, $\mu_2^2<0$ относительное равновесие устойчиво в $\mathcal{M}_1$ и неустойчиво в $\mathcal{M}_2$, при $\mu_1^2<0$, $\mu_2^2>0$, наоборот, относительное равновесие неустойчиво в $\mathcal{M}_1$ и устойчиво в $\mathcal{M}_2$.

Изображенная на рис.~\ref{fig2} кривая $\ell_0$ в области $\delta_2$ задана уравнением
\begin{equation}\notag
    \lambda (r-\lambda)+d=0 \qquad (d>0).
\end{equation}
Она отвечает случаю, когда в точках $\xi_\pm(r,\lambda)$ значение интеграла $L$ равно нулю, т.е. эти точки попадают на один интегральный уровень. Однако, как нетрудно установить из соответствующих аналитических решений для траекторий ранга $1$, две точки $\xi_\pm(r,\lambda)$ и в этом случае принадлежат разным компонентам интегрального многообразия, поэтому ни тип относительного равновесия, ни топологическая структура связной компоненты его насыщенной окрестности при пересечении кривой $\ell_0$ не изменяются. В то же время ниже будет показано различие в глобальной структуре уровня первых интегралов для подобластей в $\delta_{27},\delta_{28}$, обозначение которых снабжено штрихами.

Доказательство теоремы проведем в виде последовательности утверждений.
Обозначим для краткости $x_0=\xi_{\pm}(r,\lambda)$ и пусть
\begin{equation}\label{eq:10}
    B=\nu_1 A_H+\nu_2 A_K, \qquad \nu_1^2+\nu_2^2 \ne 0.
\end{equation}
Если у такой матрицы $B$ все собственные числа различны, то она называется \textit{регулярным элементом} (алгебры симплектических операторов \cite{BolFom}). Таким образом, условие $2^\circ$ в определении невырожденности можно назвать требованием существования регулярного элемента.

\begin{lemma}\label{lem1}
{\it Точки $x_0$ сильно вырождены тогда и только тогда, когда $(r,\lambda)\in \pi_{21}$.}
\end{lemma}
\begin{proof}
Сильно вырожденная точка отвечает существованию нулевой комбинации $B$. Располагая переменные и, соответственно, элементы матриц в порядке ${\bs \omega}, {\bs \alpha},$ приравняем к нулю элемент
\begin{equation*}
    B_{12} = \ds{\frac{1}{2}}(r-\lambda)\left[ \nu_1-2 \nu_2 \lambda (Q^2+\lambda+r)\right]=0 \quad (Q^2=\displaystyle{\frac{1}{2}\left[-r+\frac{1}{r-\lambda}d\right]} \ne 0).
\end{equation*}
Отсюда выразим $\nu_1 = 2 \nu_2 \lambda (Q^2+\lambda+r)$ ($\nu_2 \ne 0$). Тогда
\begin{equation}\notag
\begin{array}{c}
    B = \varkappa\left\|
{
\begin{array}{cccccc}
0 & 0 & 0 & 0 & 0 & 0 \\
0 & 0 & 0 & 0 & 0 & 0 \\
0 & -2\sqrt{r} & 0 & 0 & \ds{\frac{\lambda+r}{Q(\lambda-r)}} & 0 \\
0 & -2r\sqrt{r} & 0 & 0 & \ds{\frac{r(\lambda+r)}{Q(\lambda-r)}} & 0 \\
2r\sqrt{r} & 0 & -\lambda Q & -\ds{\frac{r(\lambda+r)}{Q(\lambda-r)}} & 0 & \ds{\frac{2\lambda \sqrt{r}}{\lambda-r}} \\
0 & 2 r Q & 0 & 0 & -\ds{\frac{\sqrt{r}(\lambda+r)}{\lambda-r}} & 0
\end{array}
}
    \right\|,
\end{array}
\end{equation}
где $\varkappa=2 \nu_2 (\lambda-r)Q (Q^2+\lambda)$. Очевидно, матрица в правой части ненулевая, поэтому $Q^2+\lambda=0$, что равносильно уравнению
${1+\lambda (r-\lambda)^3 =0}$ с условием ${Q^2<0}$. Но $\mathop{\rm sgn}\nolimits Q^2=\mathop{\rm sgn}\nolimits [(r-\lambda)d]$ совпадает с $\mathop{\rm sgn}\nolimits r$ (см. предложение~\ref{propos2}). Поэтому из решений уравнения нужно взять только лежащие в $\delta_2$, что и дает первую кривую \eqref{eq:9}.
\end{proof}

В частности, относительные равновесия сильно вырождены над ``узловой'' точкой
$$
P_0: \quad r=-2^{-3/4}, \qquad \lambda = 2^{-3/4},
$$
в которой пересекается пучок разделяющих кривых. Далее для краткости, говоря о слабой вырожденности над разделяющими кривыми, опускаем естественную оговорку ``за исключением точки $P_0$''.

\begin{lemma}\label{lem2}
{\it Относительные равновесия над кривой $\pi_{22}$ слабо вырождены. При этом $\det A_H \ne 0$.}
\end{lemma}
\begin{proof}
Полагая $r=-\lambda$, вычислим характеристический многочлен комбинации \eqref{eq:10}. Получим
\begin{equation*}
    \chi(\mu)=\left[\mu^2+\ds{\frac{(Z^2-2)(\nu_1 Z+\nu_2)}{2Z^3}}\right]^2, \qquad Z=\lambda^2+\sqrt{\lambda^4+1}.
\end{equation*}
Поэтому в линейной оболочке операторов $A_H ,A_K$ регулярного элемента нет, нарушено условие $2^\circ$.
Случай $Z^2=2$ приводит к точке $P_0$. В остальных точках $B\ne 0$, даже если $\nu_1 Z+\nu_2=0$ и все собственные числа равны нулю. Поэтому имеет место слабая вырожденность.

Положим $\nu_1=1, \nu_2=0$. Получим для многочлена $\chi_H(\mu)$ при $\lambda \ne 2^{-3/4}$
\begin{equation*}
    \mu_{1,2}^2=-\ds{\frac{(Z^2-2)Z}{2Z^3}} \ne 0,
\end{equation*}
т.е. второй дифференциал ограничения $H$ на $\mP^4_\ell$ в соответствующих точках невырожден.
\end{proof}

\begin{lemma}\label{lem3}
{\it Относительные равновесия над кривыми $\pi_{23}, \pi_{24}, \pi_{31}$ слабо вырождены. При этом характеристический многочлен $A_H$ имеет два нулевых собственных числа.}
\end{lemma}
\begin{proof}
Характеристический многочлен комбинации \eqref{eq:10} имеет вид
\begin{equation}\notag
    \chi(\mu)= \mu^4 -\ds{\frac{(\sigma^8+2\sigma^4-8) [-2 \nu_1 \sigma^2 + \nu_2 (\sigma^4-4 )]^2}{8 \sigma^{10}}}\mu^2
\end{equation}
при условии $(r,\lambda) \in \pi_{23} \cup \pi_{31}$. Полагая $Z=\lambda^{2/3}+\sqrt{4+\lambda^{4/3}}$, в точках кривой $\pi_{24}$ получим
\begin{equation}\notag
\begin{array}{c}
\chi(\mu)= \mu^4 + \ds{\frac{(Z^2-8) (4 + Z^2) \{
    \nu_2 [Z^2 (Z^2-8)^2-64]-4 \nu_1 Z^3\}^2}{512 Z^7}} \mu^2.
\end{array}
\end{equation}
Поэтому $\chi(\mu)$ при всех $\nu_1,\nu_2$ имеет два нулевых корня, т.е. регулярного элемента нет. В частности, два нулевых собственных числа имеет и сама матрица $A_H$ (естественно, с учетом замечания~\ref{remm1}).
\end{proof}

Итак, доказана вырожденность относительных равновесий в прообразах всех кривых, перечисленных в формулировке теоремы~\ref{theo1}. Это технически наиболее сложная часть доказательства теоремы, так как необходимо доказывать \textit{несуществование} некоторого объекта (регулярного элемента). Примеры явных доказательств вырожденности в литературе на эту тему автору неизвестны. Обычно ограничиваются утверждением об областях невырожденности (достаточными условиями). Здесь такое утверждение получить легко.

\begin{lemma}\label{lem4}
{\it За пределами кривых, перечисленных в теореме~{\rm \ref{theo1}}, относительные равновесия невырождены. При этом во всех таких точках второй дифференциал приведенного гамильтониана невырожден.}
\end{lemma}
\begin{proof}
По лемме~\ref{lem1} сильного вырождения в этой области быть не может. В качестве возможного регулярного элемента возьмем саму матрицу $A_H$. Она может иметь совпадающие собственные числа лишь в трех случаях $\mu_1^2=0$, $\mu_2^2=0$ или $\mu_1^2=\mu_2^2$.
Первое условие при $r\ne 0$ дает уравнение
$(r-\lambda)(3r-\lambda)=4$ с условием $\mathop{\rm sgn}\nolimits (2r-\lambda)=\mathop{\rm sgn}\nolimits r$.
Отсюда подстановкой $r=\lambda -\sigma$ получаем параметрические выражения кривых $\pi_{23}, \pi_{31}$. При $r=0$ из $\mu_1^2=0$ следует $\lambda=\sqrt{2}$, $d=2>0$. Это -- граничная точка кривой $\pi_{23}$.
Случай $\mu_2^2=0$ сводится к уравнению $r(r-\lambda)=\lambda^{2/3}$,
которое при $\lambda>0$ определяет ровно по одной точке в $\delta_2$ ($r<0$) и в $\delta_3$ ($r>\lambda$). В этих областях следует считать $d>0$. Но тогда в области $\delta_3$ будет
${(2r-\lambda)(r-\lambda)r=-\lambda r <0}$, что не так. Поэтому допустимым решением здесь является лишь кривая $\pi_{24}$.

Вне кривых $\pi_{23}, \pi_{31}, \pi_{24}$ многочлен $\chi_H(\mu)$ не имеет нулевых корней, что, в частности, означает невырожденность ограничения второго дифференциала функции $H$ на фазовое пространство $\mP^4_\ell$ соответствующей приведенной системы.

Условие $\mu_1^2=\mu_2^2$ за пределами кривой $\pi_{22}$ дает
$$
   (2r-\lambda)(r-\lambda)+d=0.
$$
Выполняя подстановку $r=\lambda-\sigma$, получим то же параметрическое представление, что и на кривых $\pi_{23}, \pi_{31}$, но при $\sigma \geqslant \sqrt{2}$ и $d<0$. Характеристический многочлен $\chi_H(\mu)$ по условию имеет хотя и ненулевые, но попарно совпадающие собственные числа и
задачу о невырожденности не решает. Вычислим, однако, характеристический многочлен $A_K$:
$$
\chi_K(\mu)=\left[\mu^2 + \ds{\frac{(\sigma^4-4)^2 (4 + \sigma^4)}{\sigma^{14}}}\right] \left[\mu^2 + \ds{\frac{(\sigma^4+4) (3 \sigma^8 - 7 \sigma^4 + 4)^2}{\sigma^{14}}}\right].
$$
При $\sigma > \sqrt{2}$ все его корни различны, поэтому он представляет собой искомый регулярный элемент.
В граничной точке $\sigma =\sqrt{2}$, которая соответствует нижнему абсолютному равновесию тела $c_+$ при $\lambda=\sqrt{2}$ ($r=0$, $d = -2$), многочлен
$$
\chi_K(\mu)=(\mu^2+36)\mu^2
$$
регулярным элементом не является. Однако для комбинации \eqref{eq:10} имеем $\chi(\mu)=(\mu^2+\nu_1^2) \left[ \mu^2+(\nu_1-6\nu_2)^2\right].$ Поэтому при любых $\nu_1 \ne 0$, $\nu_2 \ne 0$ матрица $B$ есть регулярный элемент, и соответствующие точки невырождены.
\end{proof}

Доказательство теоремы завершается применением лемм~\ref{lem1}\,--\,\ref{lem4} и непосредственным определением знаков величин \eqref{eq:8} в подобластях множества $\mathcal{D}$.

\section{Топология интегральных уровней относительных равновесий}
Пусть $x$ -- точка невырожденного относительного равновесия, $\ell=L(x)$. Рассмотрим $\mathcal{F}_\ell=H{\times}K: \mP^4_\ell \to \bbR^2$ -- интегральное отображение приведенной системы. Обозначим через $J(x)$ полный прообраз точки $\mathcal{F}_\ell(x)$ -- критическую интегральную поверхность, а через $U(x)$ -- достаточно малую насыщенную окрестность поверхности $J(x)$, не содержащую относительных равновесий с другими значениями отображения $\mathcal{F}_\ell$. Поверхность $J(x)$ может состоять из нескольких связных компонент. Как следует из формул \eqref{eq:3}, компонента точки $x$ всегда содержит ровно одно относительное равновесие. Одновременно могут существовать компоненты, содержащие ровно одну критическую окружность. Количество и топология этих компонент устанавливаются по сводке результатов для критических подсистем, приведенных в работе \cite{KhRyabUdgu}. Рассматривая точку $x$ в каждой из двух содержащих ее критических подсистем \cite{KhRyabUdgu}, анализируем информацию по прилегающим  областям в образе критической подсистемы. Эти области порождают дуги бифуркационной диаграммы отображения $\mathcal{F}_\ell$ в окрестности точки $x$. Структура критического множества в прообразах этих дуг и перестройки в $U(x)$ при их пересечении находятся по таблицам из \cite{KhRyabUdgu}. После этого тип круговых молекул самих относительных равновесий и лежащих на том же уровне критических периодических траекторий вместе с метками однозначно устанавливается исходя из исчерпывающего описания круговых молекул невырожденных особенностей низкой сложности \cite{BolFom}. Кроме компонент, содержащих критические точки, в $J(x)$ могут входить и регулярные торы, заполненные двояко-периодическими траекториями. Их количество однозначно устанавливается по виду бифуркационной диаграммы, дополненной указанием атомов на дугах.

\def\wid{18mm}

\begin{table}[h!]

{\small

\renewcommand{\arraystretch}{0}

\begin{center}

\begin{tabular}{|m{10mm}|m{16mm}|m{17mm}|m{24mm}|m{24mm}|m{8mm}|}
\multicolumn{6}{r}{Таблица}\\[2mm]
\hline
\hspace*{-2mm}{\renewcommand{\arraystretch}{0.8}\fns{\begin{tabular}{c} Класс\\точек\end{tabular}} }
&
\hspace*{-2.5mm}{ \renewcommand{\arraystretch}{0.8}\fns{\begin{tabular}{c} \ru{10} Компо-\\
нент\\ в прообразе
\end{tabular}}  }
&
\hspace*{-3mm}{ \fns{\renewcommand{\arraystretch}{0.8}\begin{tabular}{c} \ru{10} Особые\\
траектории\\ в прообразе
\end{tabular}}  } & \fns{\renewcommand{\arraystretch}{0.8}\begin{tabular}{c}Диаграмма\end{tabular}} &
{\renewcommand{\arraystretch}{0.8} \fns{\begin{tabular}{c} Молекулы\\в прообразе\end{tabular}} }
&
{
\hspace*{-3mm}\fns{\renewcommand{\arraystretch}{0.8}\begin{tabular}{c}
 Регул.\\ торы
\end{tabular}} }\\
\hline
\hf{$\delta_1$} & \hf{1} & \hf{$ {p_{CC}} $} & \hf{\includegraphics[width=\wid,keepaspectratio]{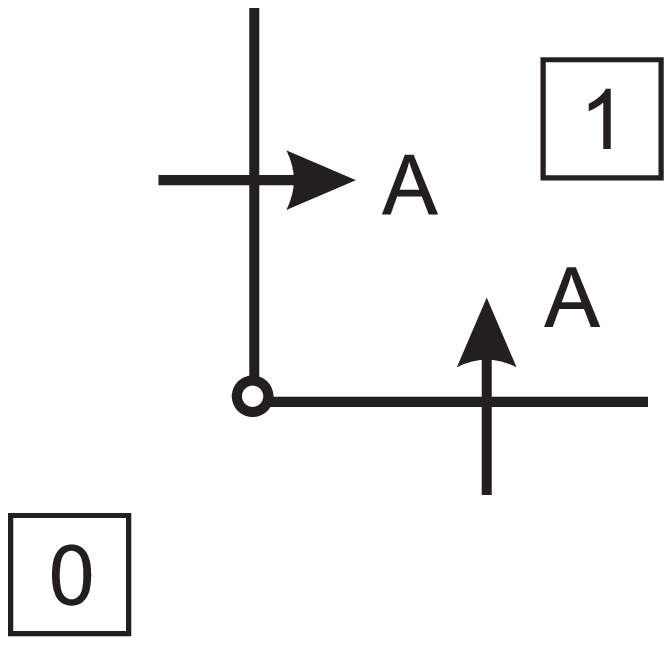}} & \hf{\includegraphics[width=\wid,keepaspectratio]{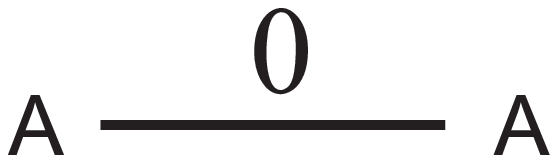}} & \hf{0}\\
\hline
 \hf{$\delta_{25}$} & \hf{3} & \hf{${p_{CC}} \cup S_E^1$}& \hf{\includegraphics[width=\wid,keepaspectratio]{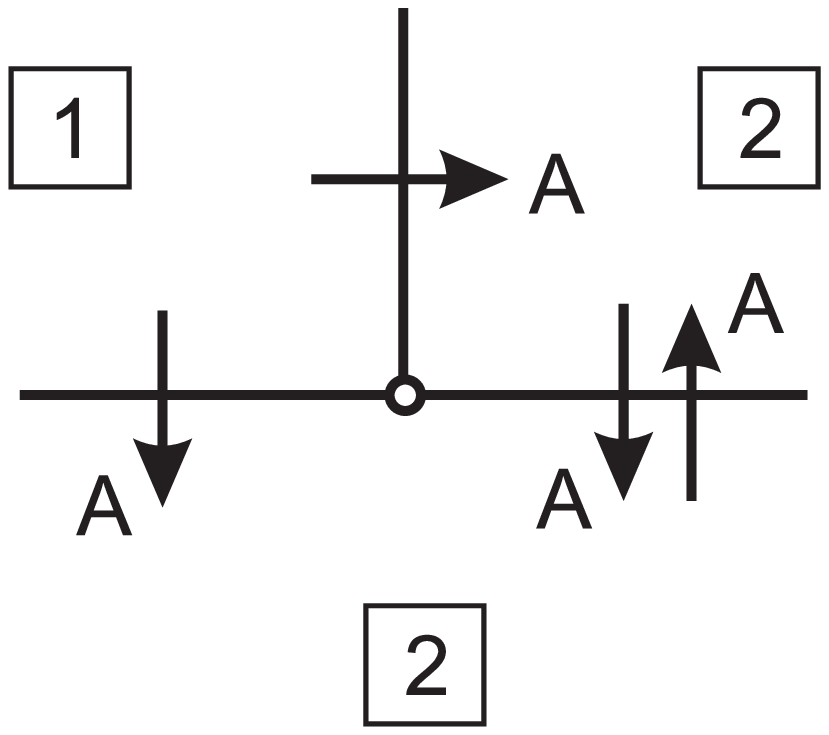}} & \hf{\includegraphics[width=\wid,keepaspectratio]{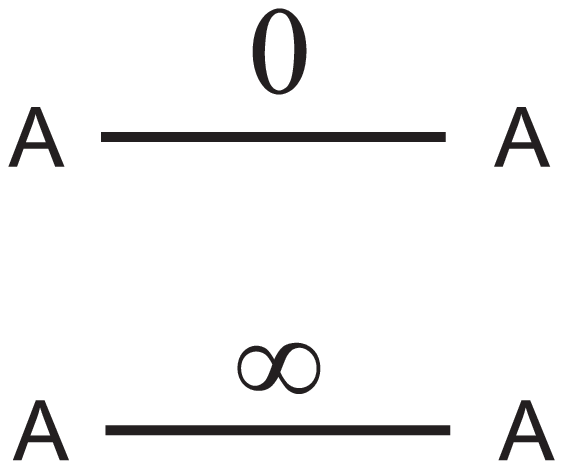}} & \hf{1} \\
\hline
\hf{$\delta^{\prime\prime}_{28}, \delta_{32}$} & \hf{2} & \hf{${p_{CC}} \cup S_E^1$}& \hf{\includegraphics[width=\wid,keepaspectratio]{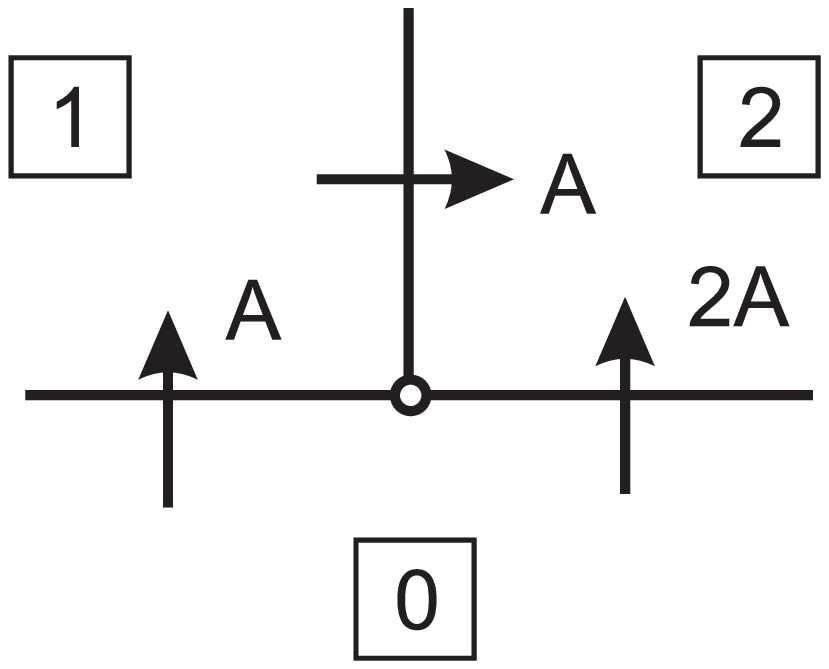}} & \hf{\includegraphics[width=\wid,keepaspectratio]{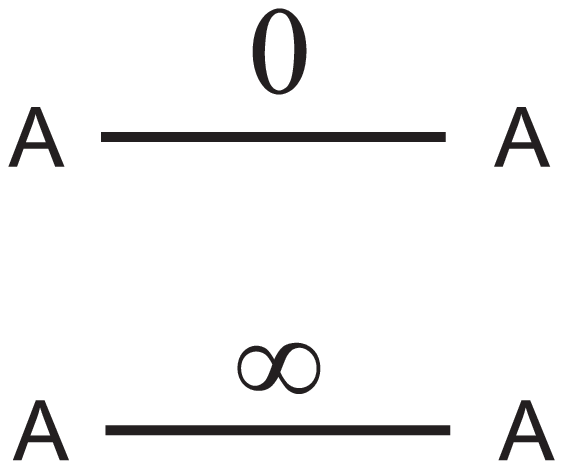}} & \hf{0} \\
\hline
\hf{$\delta_{24},\delta^{\prime}_{28}$} & \hf{4} & \hf{${p_{CC}}\cup 2S_E^1$} & \hf{\includegraphics[width=\wid,keepaspectratio]{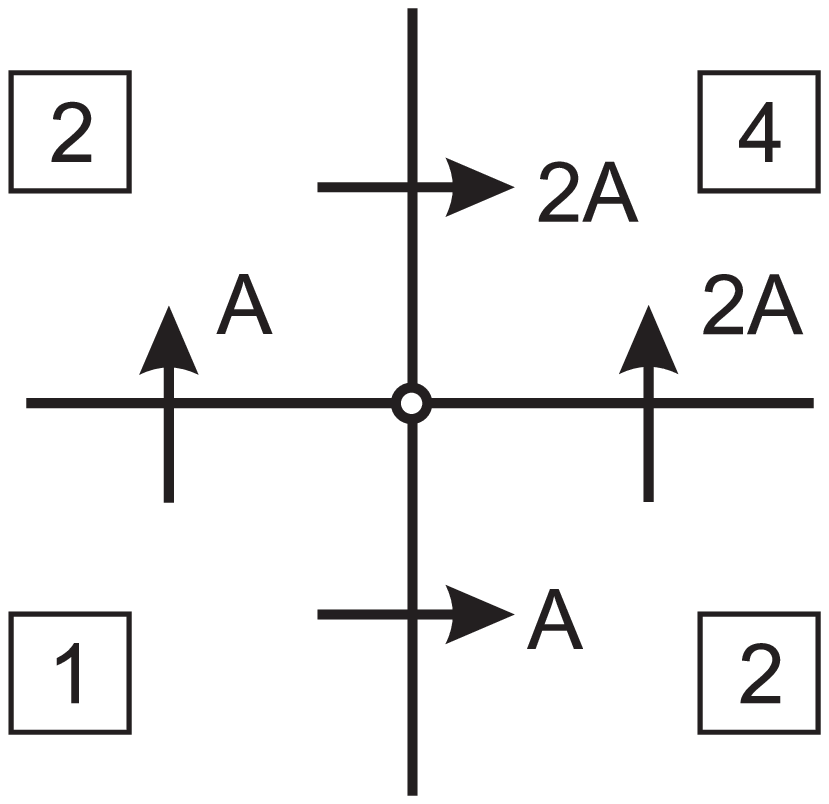}} & \hf{\includegraphics[width=\wid,keepaspectratio]{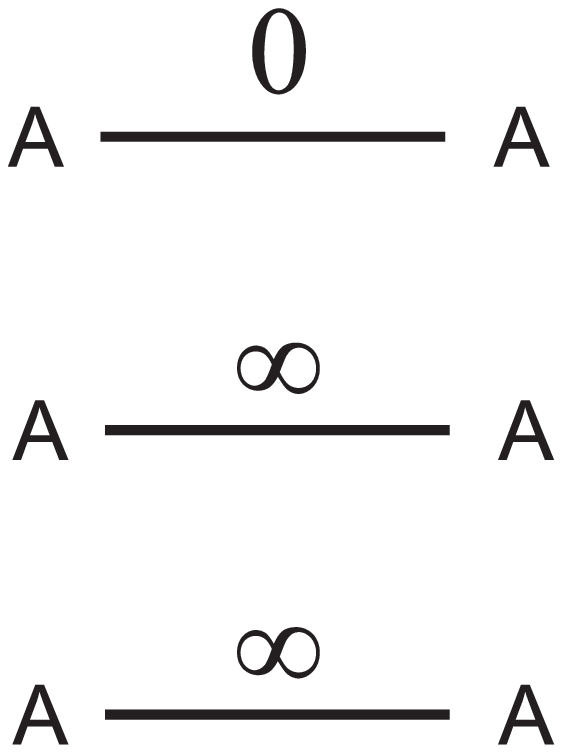}} & \hf{1}\\
\hline
\hf{$\delta_{23},\delta^{\prime}_{27}$} & \hf{2} & \hf{${p_{CS}}\cup S_H^1$} & \hf{\includegraphics[width=\wid,keepaspectratio]{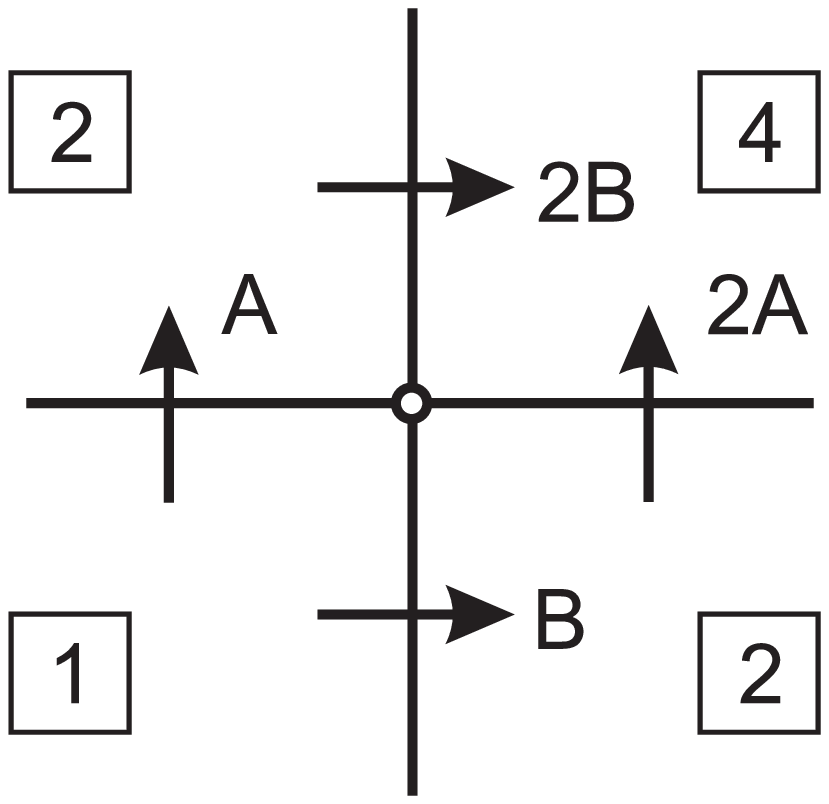}} & \hf{\includegraphics[width=\wid,keepaspectratio]{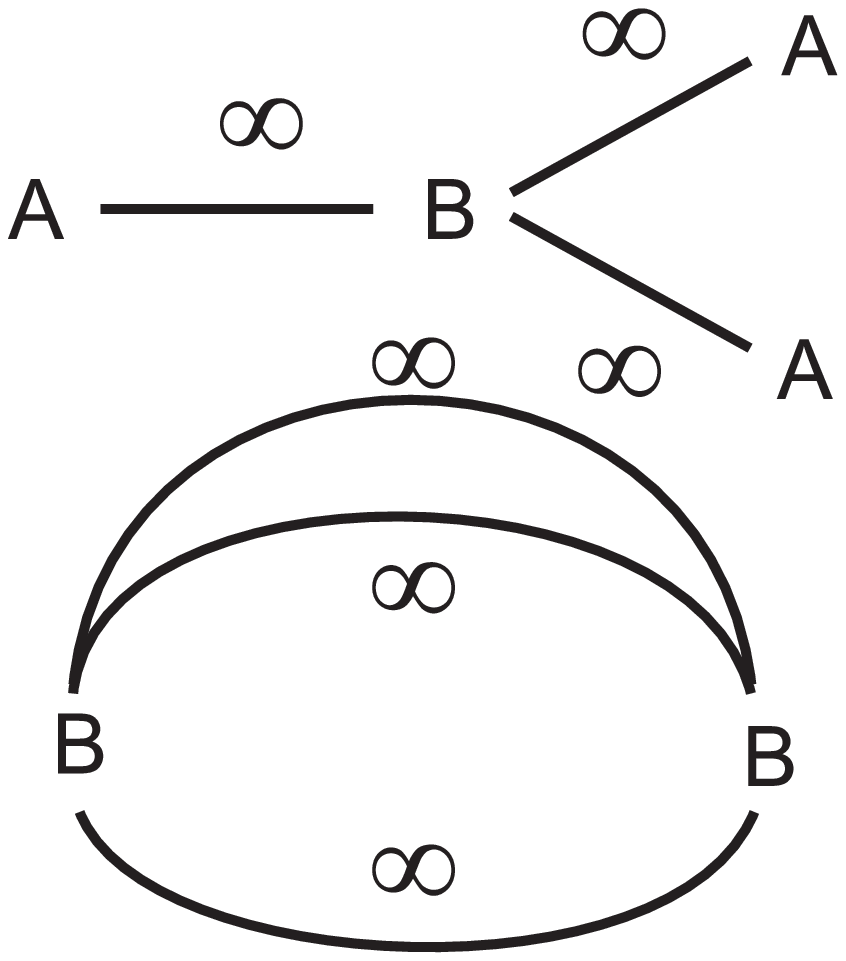}} & \hf{0}\\
\hline
\hf{$\delta_{31},\delta^{\prime\prime}_{27}$} & \hf{1} & \hf{${p_{CS}}$} & \hf{\includegraphics[width=\wid,keepaspectratio]{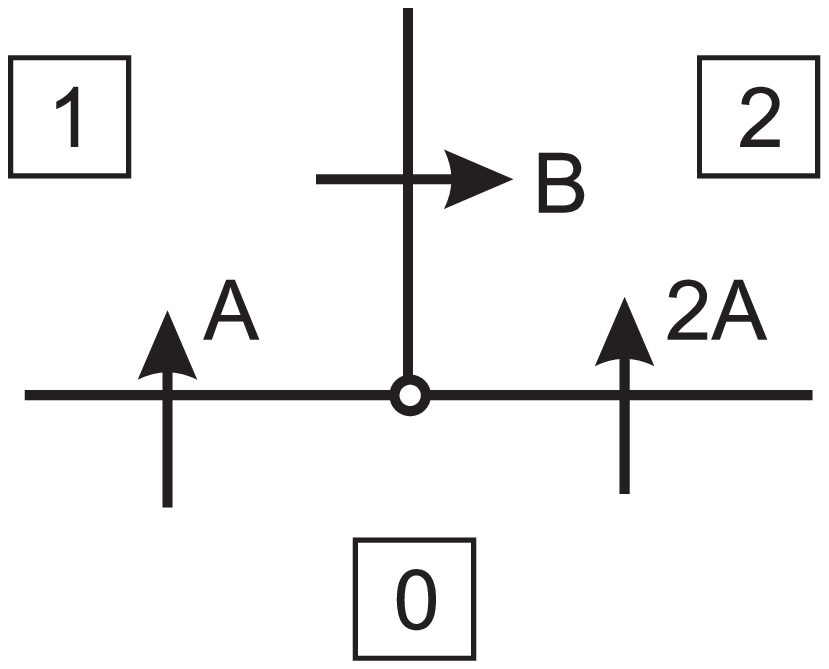}} & \hf{\includegraphics[width=\wid,keepaspectratio]{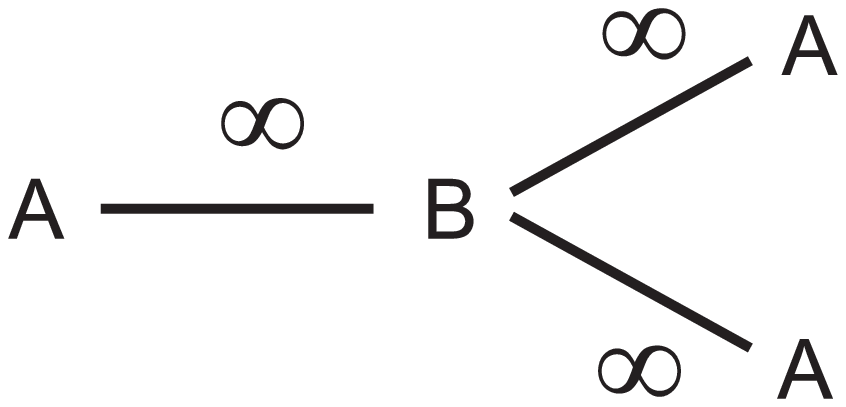}} & \hf{0}\\
\hline
\hf{$\delta_{22},\delta_{26}$} & \hf{1} & \hf{${p_{SS}}$} & \hf{\includegraphics[width=\wid,keepaspectratio]{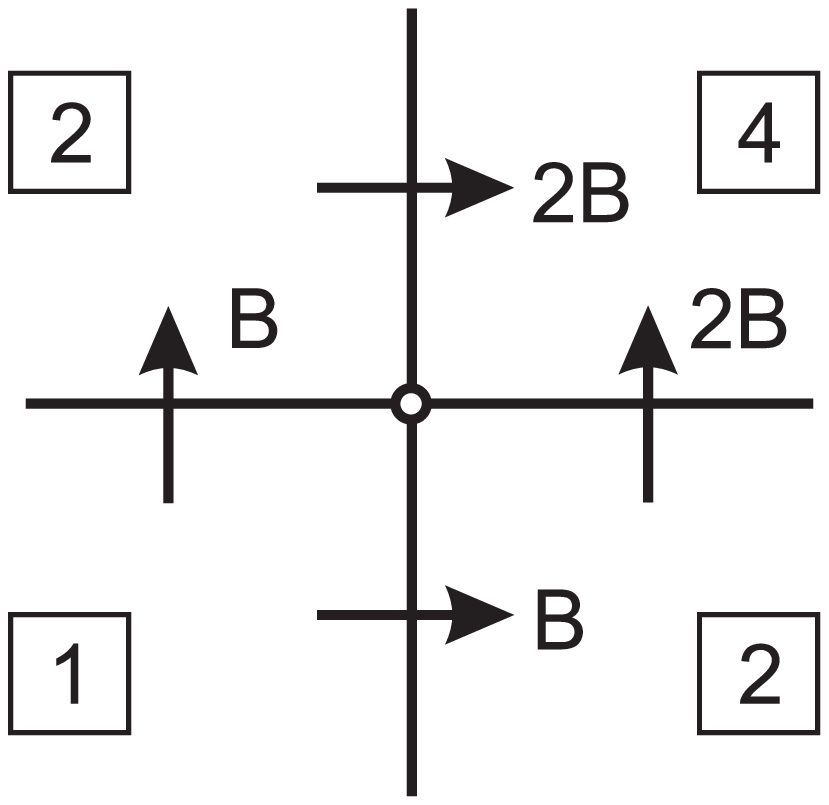}} &
\hf{\includegraphics[width=\wid,keepaspectratio]{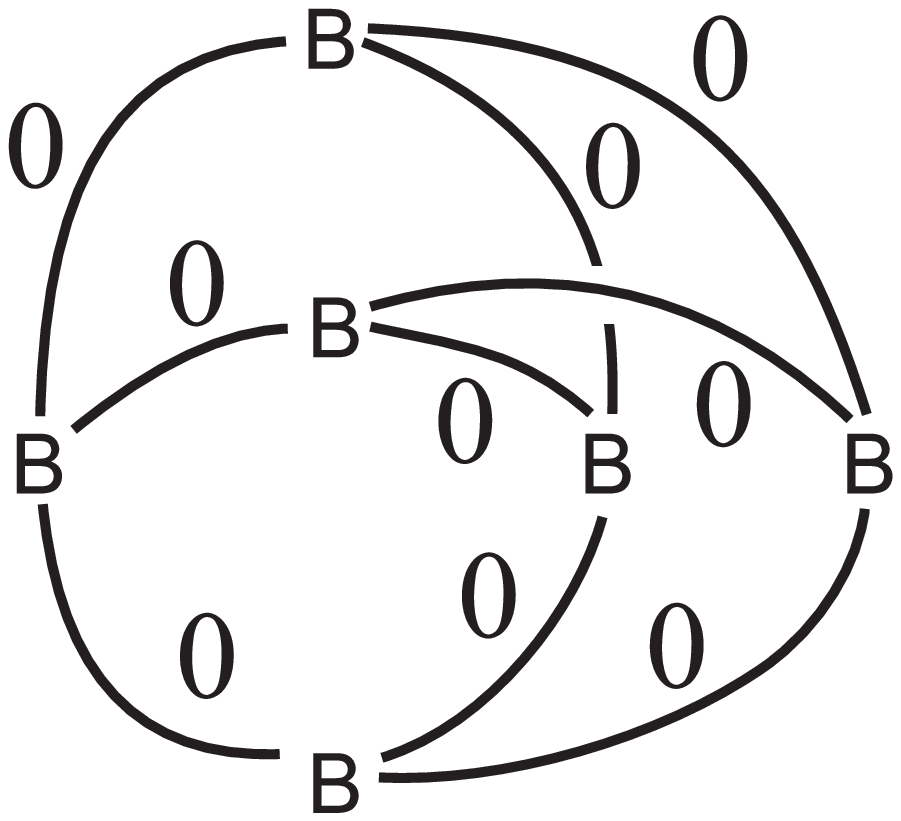}} & \hf{0}\\
\hline

\hf{$\delta_{21}$} & \hf{1} & \hf{${p_{SS}}$} & \hf{\includegraphics[width=\wid,keepaspectratio]{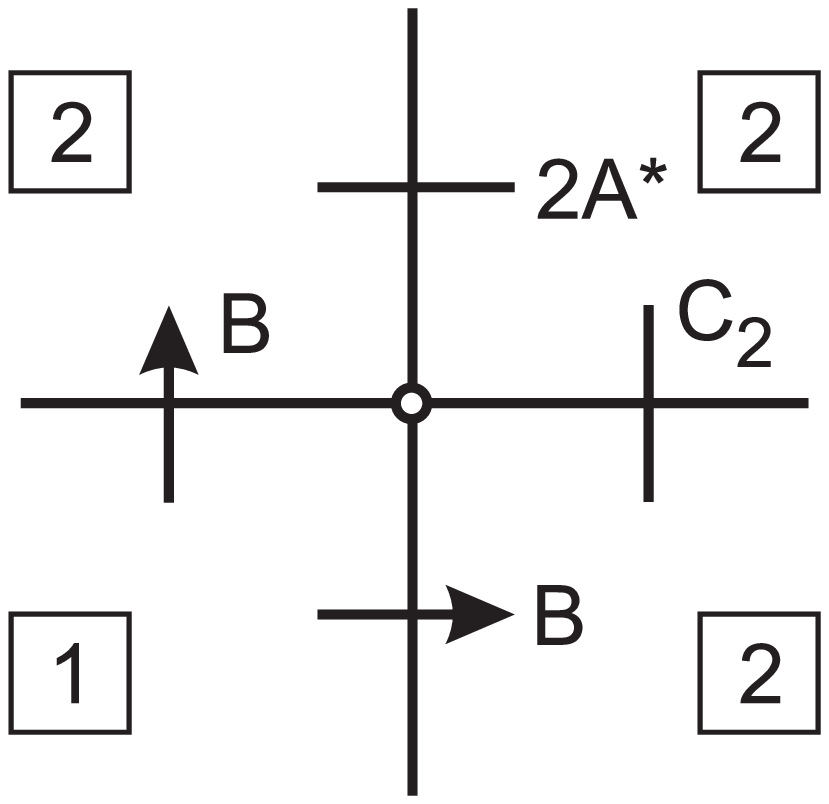}} &
\hf{\includegraphics[width=\wid,keepaspectratio]{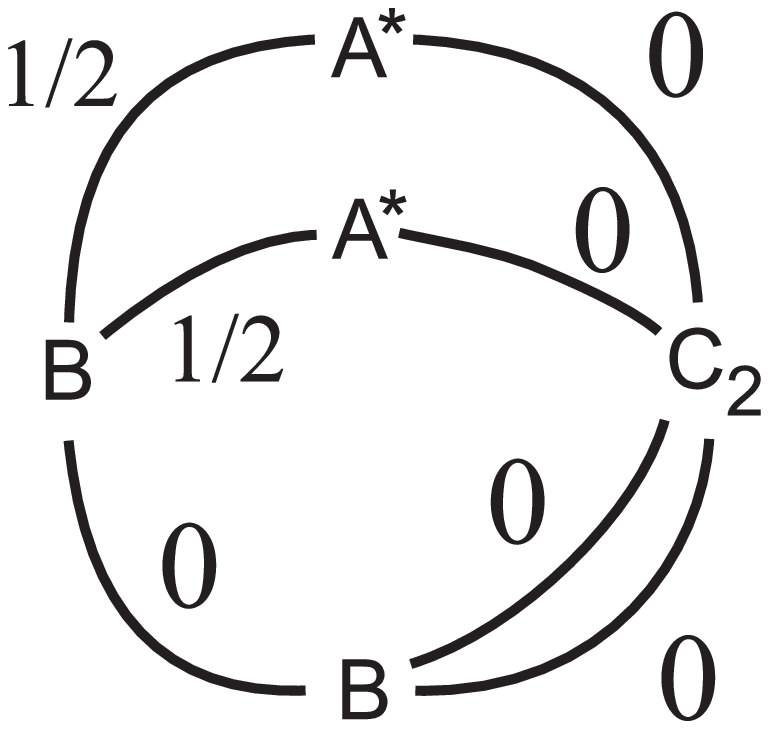}} & \hf{0}\\
\hline
\end{tabular}
\end{center}

}
\end{table}

%\cleapage

Результат топологической классификации приведен в таблице. Для круговых молекул указаны только $r$-метки с единственной целью -- отличить молекулы относительных равновесий типа ``центр-центр'' (метка $r=0$) от молекул лежащих на том же уровне эллиптических периодических траекторий (метка $r=\infty$). В действительности же, здесь все метки (включая $\varepsilon, n$-метки) выставляются автоматически в соответствии с \cite{BolFom}.

На фрагментах диаграммы в окрестности относительного равновесия указаны атомы, возникающие на критических окружностях при пересечении дуг диаграммы. Здесь встречаются лишь атомы типов $A,B,A^*,C_2$. Для несимметричных атомов стрелкой указано направление возрастания числа торов. Само это число в регулярных областях указано в рамке. При описании особых траекторий (сингулярной компоненты интегрального многообразия) ${p_{CC}}$ -- это неподвижная точка типа ``центр-центр'', ${p_{CS}}$~-- критическое многообразие неподвижной точки типа ``центр-седло'' -- восьмерка, ${p_{SS}}$~-- критическое многообразие неподвижной точки типа ``седло-седло'' -- две восьмерки с общей центральной точкой и приклеенные к ним четыре прямоугольника, заполненных асимптотическими траекториями из регулярных точек (правило склейки полностью определено соответствующей круговой молекулой). Через $S^1_E$ обозначена периодическая траектория эллиптического типа, исчерпывающая соответствующую связную компоненту, а через $S^1_H$ -- поверхность периодической траектории гиперболического типа, отвечающая атому типа $B$ (прямое произведение восьмерки на окружность). Полученная классификация по количеству классов и по виду круговых молекул полного прообраза значения отображения момента в относительных равновесиях отличается от результатов, представленных недавно в \cite{Log}. Ввиду отсутствия в цитируемой работе точного определения принципа классификации детальное сопоставление результатов не проводилось.

\section*{Заключение}
Сформулируем кратко результаты классификации относительных равновесий случая Ковалевской\,--\,Яхья.

1) В фазовом пространстве $\mP^5=\bbR^3{\times}S^2$ при любом $\lambda>0$ множество относительных равновесий $\mct(\lambda)$ однопараметрическое, имеет четыре связных компоненты. Две из них сохраняются симметрией ${\bs \tau}$, меняющей знак интеграла площадей, две остальных симметричны друг другу.

2) Объявляя эквивалентными в расширенном множестве $\cup_\lambda (\mct(\lambda),\lambda)$ относительные равновесия, которые можно перевести друг в друга непрерывным изменением параметров или симметрией ${\bs \tau}$ с сохранением топологии связной насыщенной окрестности, получим 11 классов эквивалентности. Разделяющие кривые в пространстве параметров и типы относительных равновесий в классах определены в теореме~\ref{theo1}. Для связных окрестностей относительных равновесий имеется четыре вида круговых молекул.

3) Требуя сохранение при непрерывном изменении параметров полного уровня первых интегралов, отвечающего относительному равновесию, приходим к 13 классам. Для таких интегральных многообразий получаем семь различных сочетаний круговых молекул на одном уровне без учета наличия регулярных торов, и восемь -- с учетом этого наличия (два последних столбца таблицы).

%\clearpage

%\clearpage

\begin{abstract}[en]
\title{Analytical classification of the permanent rotations\\of the Kowalevski\,--\,Yehia gyrostat}
\author{M.P.~Kharlamov}
The complete investigation of the permanent rotations of a gyrostat in the integrable case of Kowalevski\,--\,Yehia is presented.
The notion of equivalence classes is given with respect to the defining parameters, the separating set is constructed.
For each class the type of a singularity is calculated as the type of a fixed point in the reduced system. The detailed character of stability is obtained, and the structure of local Liouville foliation is shown.
\keywords{permanent rotations, type of singularity, stability, loop molecule.}
\end{abstract}

\begin{abstract}[uk]
\title{Аналiтична класифiкация рівномірних обертань\\гiростата Ковалевско\"{\i}\,--\,Яхья}
\author{М.П.~Харламов}
Представлено повне дослідження безлічі рівномірних обертань гіростата в разі інтегрованості Ковалевсько\"{\i}\,--\,Яхья. Введено поняття класів еквівалентності відносно визначальних параметрів, побудована розділяюча безліч. Для кожного класу обчислений тип особливості як тип нерухомо\"{\i} точкi в приведеній системі, отриманий детальний характер стійкості, вказана структура локального шарування Ліувіля.
\keywords{рівномірні обертання, тип особливості, стійкість, кругова молекула.}
\end{abstract}

\makeaddressline


\begin{thebibliography}{99}
\bibitem{Yeh1}
{\it Yehia H.M.} New integrable cases in the dynamics of rigid bodies~// Mech. Res. Commun.~--  1986. -- {\bf 13}, 3. -- P.~169-172.

\bibitem{mtt40}
{\it Харламов М.П., Харламова И.И., Шведов Е.Г.} Бифуркационные диаграммы на изоэнергетических уровнях гиростата Ковалевской--Яхья~// Механика твердого тела. --  2010.~-- №~40.~-- С.~77-90.

\bibitem{Gorr}
{\it Горр Г.В., Мазнев А.В.} Динамика гиростата, имеющего неподвижную точку // Донецк: Изд-во ДонНУ.~-- 2010.~-- 365 с.


\bibitem{Gash1}
{\it Гашененко И.Н.} Новый класс движений тяжелого гиростата~// Докл. АН СССР.~--  1991.~-- {\bf 318}, 1.~-- С.~66-68.


\bibitem{Malah}
{\it Малаха А.Е.} Об одном классе асимптотических движений гиростата Ковалевской~// Механика твердого тела.~--  1997.~-- №~29.~-- С.~7-10.

\bibitem{PVLect}
{\it Харламов П.В.} Лекции по динамике твердого тела.~-- Изд-во НГУ.~--  1965.~-- 221~с.

\bibitem{Gash4}
{\it Гашененко И.Н.} Бифуркационное множество задачи о движении гиростата, подчиненного условиям Ковалевской~// Механика твердого тела.~--  1995.~-- №~27.~-- С.~31-35.

\bibitem{BolFom}
{\it Болсинов~А.В., Фоменко~А.Т.} Интегрируемые гамильтоновы системы. Геометрия, \mbox{топология}, классификация~// Ижевск: Изд-во РХД.~-- 1999.~ Т.~1,2.


\bibitem{RyabUdgu}
{\it Рябов П.Е.} Аналитическая классификация особенностей интегрируемого случая Кова\-лев\-ской--Яхья~// Вестник УдГУ.~--  2010.~-- №~4.~-- С.~25-30.


\bibitem{EIPVHDan} {\it Харламова Е.И., Харламов П.В.} Новое решение дифференциальных уравнений движения тела, имеющего неподвижную точку, при условиях С.В. Ковалевской // Доклады
АН СССР.~-- 1969.~-- {\bf 189}, 5.~-- С.~967-968.

\bibitem{PVMtt71}
{\it Харламов П.В.} Один случай интегрируемости уравнений движения твердого тела, имеющего неподвижную точку~// Механика твердого тела. --  1971. -- №~3. -- С.~57-64.

\bibitem{KhRyabUdgu}
{\it Харламов М.П., Рябов П.Е.} Диаграммы Смейла--Фоменко и грубые инварианты Кова\-лев\-ской--Яхья~// Вестник УдГУ.~-- 2011.~-- № 4.~-- С.~40–59.

\bibitem{Log}
{\it Логачева Н.С.} Классификация невырожденных положений равновесия и вырожденных одномерных орбит интегрируемой системы Кова\-лев\-ской--Яхьи~// Матем. сб.~-- 2012.~-- {\bf 203}, 1.~-- С.~31–60.



\end{thebibliography}
\end{document}